\theoremstyle{plain}
\newtheorem{theorem}{Theorem}
\newtheorem{statement}[theorem]{Statement}
\newtheorem{corollary}[theorem]{Corollary}
\newtheorem{lemma}[theorem]{Lemma}
\theoremstyle{remark}
\newtheorem{example}{Example}
\newtheorem{remark}{Remark}
\newcommand{\Complex}{\mathbb{C}}
\newcommand{\Integer}{\mathbb{Z}}
\newcommand{\coef}{\mathop{\rm coef}\nolimits}
\newcommand{\const}{\mathop{\rm const}}
\newcommand{\res}{\mathop{\rm res}}
\newcommand{\Res}{\mathop{\rm Res}}
\newcommand{\im}{\mathop{\rm Im}}
\newcommand{\Eu}{{\sf E}}
\newcommand{\FF}{{\cal F}}
\newcommand{\nFF}{\widetilde\FF}
\title{Necessary integrability conditions for evolutionary lattice equations}
\author{V.E. Adler\thanks{L.D. Landau Institute for theoretical physics,
Ak. Semenov str. 1-A, 142432 Chernogolovka, Russian Federation. E-mail: {\tt
adler@itp.ac.ru}}}
\date{May 29, 2014}
\begin{document}\thispagestyle{empty}
\maketitle
\begin{abstract}
The structure of solutions is studied for the Lax equation $D_t(G)=[F,G]$ for
formal power series with respect to the shift operator. It is proved that if
the equation with a given series $F$ of degree $m$ admits a solution $G$ of
degree $k$ then it admits, as well, a solution $H$ of degree $m$ such that
$H^k=G^m$. This property is used for derivation of necessary integrability
conditions for scalar evolutionary lattices.\medskip

\noindent{\em Keywords:}\/  Volterra type lattice, higher symmetry,
conservation law, integrability test
\smallskip

\noindent MSC 37K10\qquad  PACS 02.30.Ik
\end{abstract}

\section{Introduction}\label{s:i}

It is well known that existence of infinite sets of higher symmetries and
conservation laws is a characteristic property of integrable equations. In the
case of two-dimensional evolutionary equations $\partial_t(u)=f[u]$, this
implies existence of formal operator series $G,R$ which satisfy the equations
(see notations in section \ref{s:G})
\begin{equation}\label{GR}
 D_t(G)=[f_*,G],\quad D_t(R)+f_*^\dag R+Rf_*=0.
\end{equation}
The solvability of (\ref{GR}) with respect to the coefficients of $G,R$
provides the necessary conditions of integrability. This approach has been
applied, in the papers by Shabat et al, for classification of integrable
equations, both in partial derivatives \cite{Sokolov_Shabat_1984,
Mikhailov_Shabat_Yamilov_1987, Sokolov_1988, Mikhailov_Shabat_Sokolov_1991,
Mikhailov_Shabat_1993, Meshkov_Sokolov_2013} and differential-difference ones
\cite{Yamilov_1983, Mikhailov_Shabat_Yamilov_1987, Shabat_Yamilov_1991,
Levi_Yamilov_1997, Yamilov_2006, Adler_2008}. The continuous and discrete
equations have much in common, but there are differences as well.

In the continuous case, $G$ and $R$ are pseudodifferential operators, that is,
Laurent series with respect to inverse powers of total derivative $D$. If we
consider 1-component evolutionary equations then the coefficients of the series
are scalar functions of dynamical variables $u$. The root extraction plays a
very important role in the theory of such series. This operation is defined for
a generic series $G=g_kD^k+g_{k-1}D^{k-1}+\dots$, since the coefficients of the
series $G^{1/k}$ are computed by explicit algebraic formulae which do not lead
out from the coefficient field in consideration \cite{Schur_1905}. This
property drastically simplifies the study of equations (\ref{GR}), because it
allows us to set $\deg G=1$, $\deg R=0$ without loss of generality. In
particular, the integrability conditions can be rewritten in the form of a
sequence of conservation laws
\begin{equation}\label{Drs}
 D_t(\rho_j)=D(\sigma_j),\quad j=0,1,\dots
\end{equation}
where $\rho_j$ are explicitly expressed through $\rho_i,\sigma_i$ at $i<j$. If
the left hand side belongs to the image of $D$ (which coincides with the kernel
of the variational derivative) then one can find $\sigma_j$ and pass to the
next step of integrability test.

The situation is more complicated in the case of the lattice equations
\begin{equation}\label{utn}
 \partial_t(u_n)=f(u_{n+m},\dots,u_{n-m}),\quad n\in\Integer
\end{equation}
studied in the present paper. Here, $G,R$ are power series with respect to the
shift operator $T$ and the root $G^{1/k}$ does not exist for a generic series
$G=g_kT^k+g_{k-1}T^{k-1}+\cdots$. It is clear already from consideration of the
leading coefficient which must be of a special form $g_k=hT(h)\cdots
T^{k-1}(h)$ in order that the root exists. As a result, the integrability
conditions are more involved and cannot be cast into the form of conservation
laws, in general; equations (\ref{GR}) lose their effectiveness, because the
degrees of the series $G,R$ are not known in advance. However, it turns out
that, in fact, the integrability conditions do not depend on $k$. The goal of
this paper is to prove that the general case can be reduced to $\deg G=m$,
$0\le\deg R<m$.

In section \ref{s:H}, the following statement is proved: if a difference Lax
equation $D_t(G)=[F,G]$ with $\deg F=m$ admits a solution $G$ of degree $k$,
then there exists its another solution $H=G^{m/k}$ of degree $m$. The key idea
is that the coefficients of $H$ can be computed explicitly by use of equations
$D_t(H)=[F,H]$ and $H^k=G^m$ simultaneously, after this it is possible to prove
that each equation is fulfilled separately. In other words, extraction of the
root is still possible in a certain weak sense---due to the fact that the
solutions of the Lax equation are far from being generic series and their
coefficients already possess a special structure.

This observation is used in sections \ref{s:G}, \ref{s:R} in order to formulate
the necessary integrability conditions for the lattices (\ref{utn}). If $m>1$
then these conditions remain more complicated comparing to the continuous case,
but their form is quite suitable for testing of a given equation. Solving of
the Lax equation with respect to the coefficients of $G$ amounts to the
checking whether a given expression belongs to the image of an operator of the
form $T^m-T^j(a)/a$ where $a$ is a fixed function, and to the computation of
its preimage. In principle, both problems admit algorithmic solutions which
are, however, beyond the scope of this paper. The analysis of conditions in a
general form, in order to obtain classification results or to construct novel
examples is, even in the case $m=2$, a very difficult task which requires a
separate study as well. It should be noted that all examples with $m>1$ known
at the moment are equivalent (up to Miura type substitutions) to the
Bogoyavlensky lattices \cite{Bogoyavlensky_1991} and some their generalizations
\cite{Adler_Postnikov_2011, Garifullin_Yamilov_2012,
Garifullin_Mikhailov_Yamilov_2014}. In this respect, the theory lags behind the
continuous case where a number of classification results was obtained for the
Burgers type equations of orders 2,4 \cite{Svinolupov_1985} and the KdV type
equations of orders 3,5,7 (see references in \cite{Meshkov_Sokolov_2013}).

The integrable equations (\ref{utn}) at $m=1$ (the Volterra type lattices) were
classified by Yamilov \cite{Yamilov_1983}. In this case, the necessary
integrability conditions are of the form analogous to (\ref{Drs}):
\begin{equation}\label{Trs}
 D_t(\rho_j)=(T-1)(\sigma_j),\quad
 \rho_j-\bar\rho_j=(T-1)(s_j),\quad j=0,1,\dots
\end{equation}
where $\rho_j,\bar\rho_j$ are expressed through
$\rho_i,\bar\rho_i,\sigma_i,s_i$ at $i<j$. The derivation of conditions
(\ref{Trs}) in papers \cite{Levi_Yamilov_1997, Yamilov_2006} was based on the
assumption that the lattice (\ref{utn}) admits higher symmetries of orders $k$,
$k+1$ where $k$ is arbitrary large, which implies the existence of a series $G$
of degree 1. On the other hand, the authors noted that it were possible to
derive the same conditions by use of the series $G$ of any degree, although by
means of more involved computations. This is completely explained by the
procedure of root extraction described above, moreover, the assumption on the
symmetry of order $k+1$ becomes redundant.

Regarding the method of derivation of concrete expressions for the conditions
like (\ref{Drs}) or (\ref{Trs}), let us recall that the densities $\rho_j$ for
the continuous Lax equations can be computed in two ways: as the residues of
the fractional powers $\res G^{j/k}$ (Gelfand, Dikii
\cite{Gelfand_Dikii_1976}), or as the coefficients of an expansion with respect
to $\lambda$ for the logarithmic derivative of the formal Baker--Akhiezer
function (Wilson \cite{Wilson_1981}, the idea goes up to the construction of
the generating function for the conservation laws by inversion of the Miura
transformation \cite{Miura_Gardner_Kruskal_1968}). The equivalence of both
definitions was established by Wilson \cite{Wilson_1981} and Flaschka
\cite{Flaschka_1983}. A detailed description of the method based on the
residues in the context of derivation of the necessary integrability conditions
can be found, e.g. in \cite{Sokolov_Shabat_1984,
Mikhailov_Shabat_Yamilov_1987}. This method requires more involved
computations comparing to the method based on the expansion of the formal
$\psi$-function \cite{Chen_Lee_Liu_1979, Meshkov_1994, Meshkov_Sokolov_2013}.
Both methods work in the difference setting as well \cite{Kupershmidt_1985},
but only under assumption that $\deg G=1$, which is an essential stipulation,
as we have seen.

In section \ref{s:1}, the conditions (\ref{Trs}) are derived by use of
expansion of the formal $\psi$-function which is much simpler than computing
$\Res G^j$ and allows us to obtain explicit closed expressions for the
densities in terms of the Bell polynomials. Though, the extraction of the list
of integrable lattices at $m=1$ requires, according to the Yamilov's results
\cite{Yamilov_1983, Yamilov_2006}, just three simplest conditions which can be
derived without any theory under very modest assumptions about symmetries and
conservations laws. This offers hope that in the case, say, $m=2$, the
classification requires not too many integrability conditions as well.

\section{Extraction of the root in the difference setting}\label{s:H}

Let $\FF$ be the field of locally analytical functions of finite number of
dynamical variables $u_n$, $n\in\Integer$ and let the rule
\[
 T(a(u_i,\dots,u_j))=a(u_{i+1},\dots,u_{j+1})
\]
define the action of the shift operator $T$ on functions from $\FF$. The rule
$aT^ibT^j=aT^i(b)T^{i+j}$, being distributed over addition, defines a
multiplication of the difference operators. The formal Laurent series with
respect to the negative or positive powers of $T$ constitute the division rings
\[
 \FF((T^{-1}))=\{\sum_{j<+\infty}a_jT^j\mid a_j\in\FF\},\quad
 \FF((T))=\{\sum_{j>-\infty}a_jT^j\mid a_j\in\FF\}.
\]
All statements in this section are given for the series from $\FF((T^{-1}))$,
the passage to $\FF((T))$ amounts to renaming $u_n\to u_{-n}$, $T\to T^{-1}$.

Let $F=f_mT^m+\ldots\in\FF((T^{-1}))$ be a given series of degree $m>0$ and
$D_t:\FF\to\FF$ be a given evolutionary differentiation (that is, commuting
with $T$; this guarantees that $D_t$ is a differentiation in $\FF((T^{-1}))$ as
well). We are interested in solutions of the Lax equation
\begin{equation}\label{Gt}
 D_t(G)=[F,G]
\end{equation}
as the series $G\in\FF((T^{-1}))$ of degree $k>0$. In contrast to the
continuous situation, the root $G^{1/k}$ is not defined for a generic series
$G$ and the study of solutions (or the obstacles for their existence) cannot be
reduced to the case $k=1$. Nevertheless, it turns out that if a solution $G$
exists then its coefficients possess a certain special structure, such that the
following properties are fulfilled:

(i) equation (\ref{Gt}) admits another solution $H$ of degree $m$, such that
$H^k=G^m$. Here, $m$ may be not a minimal positive power of solutions. Thus, in
the discrete setting the root extraction is possible in a weakened sense. This
property is proved in theorem \ref{th:root1};

(ii) it is possible to choose a solution $G$ among all solutions of degree $m$
such that $G=F+o(T)$ (that is, $\deg(G-F)<1$). It seems obvious at first sight,
because (\ref{Gt}) yields the same subset of equations for the partial sum
$G_{>0}=g_mT^m+\dots+g_1T$ as the equation $[F,G]=0$. However, the general
solution for this subset can contain additional constant parameters comparing
to $F_{>0}$, and the further equations may turn solvable only under certain
choice of these constants. In principle, it may turn out that the whole set of
equations for the coefficients of $G$ is solvable in $\FF$ only for such
constants that $G_{>0}\ne F_{>0}$. The fact that actually this is not the case
is proved in the corollary \ref{cor:root2}.
\smallskip

Before we go on to the proofs, let us consider several concrete equations for
the solution coefficients of the Lax equation, in two simplest examples.

\begin{example}[$m=1$, $k=2$]
Let $F=f_1T+f_0+\dots$ and let equation (\ref{Gt}) admit a solution
$G=g_2T^2+g_1T+\dots$ then is it possible to find a solution $H$ such that
$H^2=G$? Let us consider several first equations for the coefficients of $G$:
\begin{align*}
 0&= f_1T(g_2)-T^2(f_1)g_2,\\
 D_t(g_2)&= f_1T(g_1)-T(f_1)g_1+f_0g_2-T^2(f_0)g_2,\\
 D_t(g_1)&= f_1T(g_0)-f_1g_0+f_0g_1-T(f_0)g_1
   +f_{-1}T^{-1}(g_2)-T^2(f_{-1})g_2.
\end{align*}
The first equation implies $g_2=f_1T(f_1)$ (up to a constant factor), then the
second one is brought to the form
\[
 (T+1)(D_t(\log f_1))=(T-1)(g_1/f_1)-(T^2-1)(f_0)
\]
and this implies that a function $h_0\in\FF$ exists such that
\[
 g_1/f_1=(T+1)(h_0),\quad D_t(\log f_1)=(T-1)(h_0-f_0).
\]
(Here, the properties of the difference operators with constant coefficients
are used: $\ker(T+1)=0$, $\ker(T-1)=\Complex$). Now, an easy computation brings
the third equation to the form
\[
 (T+1)(D_t(h_0))=(T-1)(g_0-h_0^2)-(T^2-1)(f_{-1}T^{-1}(f_1))
\]
which implies that a function $h_{-1}\in\FF$ exists such that
\[
 g_0=h_0^2+(T+1)(h_{-1}T^{-1}(f_1)),\quad
 D_t(h_0)=(T-1)((h_{-1}-f_{-1})T^{-1}(f_1)).
\]
Collecting all together we obtain
\begin{gather*}
 G=f_1T(f_1)T^2+f_1(T+1)(h_0)T
  +h_0^2+(T+1)(h_{-1}T^{-1}(f_1))+o(1)\\
 = (f_1T+h_0+h_{-1}T^{-1})^2+o(1)
\end{gather*}
in support of the conjecture that the root can be extracted indeed.
\end{example}

\begin{example}[$m=2$, $k=2$]
Let $F=f_2T^2+f_1T+\dots$ and equation (\ref{Gt}) possesses a solution
$G=f_2T^2+g_1T+\dots$ then is it possible to choose $g_1=f_1$? Assume $g_1\ne
f_1$, then $F-G$ is a series of degree 1 and we obtain the following relations
by repeating computations from the previous example for the equation
$D_t(G)=[F-G,G]$:
\[
 G=(h_1T+h_0+h_{-1}T^{-1})^2+o(1),\quad f_1-g_1=ch_1,\quad
 f_2=g_2=h_1T(h_1).
\]
Therefore, if the relation $G=H^2$ is proved then it is possible, indeed, to
obtain the solution of the form $G+cH=f_2T^2+f_1T+\dots$ as required.
\end{example}
\smallskip

A demerit of the above computations is that the coefficients of the series
$G^{1/2}$ are found implicitly, by inversion of the operators $T+1$ or $T-1$.
However, we can obtain them also in an explicit form---if it is known in
advance that the desired series does exist. The idea is that in order to
extract the root one should use both equations $D_t(H)=[F,H]$ and $H^2=G$
simultaneously. This brings to the recurrent relations of the form
\[
 f_1T(h_j)-T^j(f_1)h_j=D_t(h_{j+1})+\dots,\quad
 f_1T(h_j)+T^j(f_1)h_j=g_{j+1}+\dots
\]
where the right hand sides contain the coefficients $h_1,h_0,\dots,h_{j+1}$
found on the previous steps. Subtracting yields an explicit expression for
$h_j$. After this one has to verify that each equation is fulfilled indeed. It
is not obvious, but plausible, because the equations are consistent, in the
sense that they imply the equation $D_t(G)=[F,G]$ which is true by assumption.

In order to justify these heuristic arguments, in the proof of theorem
\ref{th:root1}, we will make use of the series with nonautonomous coefficients.
Let us consider an extension of the field $\FF$ given by the ring $\nFF$ with
elements represented by {\em sequences} of locally analytical functions
$a(n)=a(n;u_{r_n},\dots,u_{s_n})$, $n\in\Integer$ where each function in the
sequence depends on its own finite set of dynamical variables. Elements from
$\FF$ are identified with sequences of special type
$a(n)=a(u_{n+r},\dots,u_{n+s})$. By definition, multiplication of the sequences
is done termwise and the operator $T$ acts just by the shift of $n$, that is,
$T^k(a(n))=a(n+k)$. The multiplication in the ring
\[
 \nFF((T^{-1}))=\{\sum_{j<+\infty}a_j(n)T^j \mid
 a_j(n)=a_j(n;u_{r_{j,n}},\dots,u_{s_{j,n}})\in\nFF\}
\]
is defined, as before, by the rule $a(n)T^ib(n)T^j=a(n)b(n+i)T^{i+j}$. One can
easily see that the Lax equations are always solvable in such an extension.

\begin{lemma}\label{l:n-Lax}
Let $F(n)=f_m(n)T^m+\ldots\in\nFF((T^{-1}))$, $m\ge1$ and $f_m(n)\not\equiv0$
for all $n$. Then there exists a unique series
$G(n)=g_k(n)T^k+\ldots\in\nFF((T^{-1}))$, for any degree $k$, which satisfies
the equation (\ref{Gt}) and the prescribed initial conditions
$G(0),G(1),\dots,G(m-1)$.
\end{lemma}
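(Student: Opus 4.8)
The plan is to expand the Lax equation \eqref{Gt} coefficient by coefficient and to show that it becomes a cascade of first-order linear recurrences in the discrete variable $n$, each of step $m$, which are solved uniquely once the $m$ initial values dictated by $G(0),\dots,G(m-1)$ are fixed. Write $F(n)=\sum_{p\le m}f_p(n)T^p$ and $G(n)=\sum_{i\le k}g_i(n)T^i$. Using the multiplication rule $a(n)T^ib(n)T^j=a(n)b(n+i)T^{i+j}$, the coefficient of $T^r$ in $[F,G]$ equals
\[
 \sum_{p+i=r}\bigl(f_p(n)g_i(n+p)-g_i(n)f_p(n+i)\bigr),
\]
while the coefficient of $T^r$ in $D_t(G)$ is $D_t(g_r(n))$ for $r\le k$ and $0$ for $r>k$.

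The first step is to isolate the contribution of the top coefficient $f_m$. Separating the term with $p=m$ (hence $i=r-m$) from the rest, the equation matching the coefficients of $T^r$ takes the form
\[
 f_m(n)g_{r-m}(n+m)-f_m(n+r-m)g_{r-m}(n)=\Psi_r(n),
\]
where $\Psi_r(n)$ gathers $D_t(g_r(n))$ (present only when $r\le k$) together with all terms coming from $f_p$ with $p<m$; the latter involve only the coefficients $g_j$ with $j>r-m$. The crucial observation is that if the coefficients are produced in order of decreasing $r$ (equivalently, decreasing $i=r-m$, starting from $i=k$), then at the step that yields $g_{r-m}$ the right-hand side $\Psi_r(n)$ is already known: the coefficients $g_j$ with $j>r-m$ were fixed at earlier steps, and $g_r$ itself, entering through $D_t(g_r)$ when $r\le k$, was fixed at the step matching $T^{r+m}$, which precedes the present one since $r+m>r$.

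Each such relation is a first-order linear recurrence of step $m$ in $n$. Since $f_m(n)\ne0$ for all $n$, it is solved uniquely in both directions by
\[
 g_{r-m}(n+m)=\frac{f_m(n+r-m)g_{r-m}(n)+\Psi_r(n)}{f_m(n)}
\]
(and the analogous backward formula), once the $m$ values $g_{r-m}(0),\dots,g_{r-m}(m-1)$ are prescribed; these are exactly the data carried by $G(0),\dots,G(m-1)$. The topmost equation (coefficient of $T^{m+k}$), for which $\Psi$ vanishes identically, fixes $g_k$; descending in $r$ then determines $g_{k-1},g_{k-2},\dots$ one after another. Existence and uniqueness of $G(n)$ follow by induction on $i=k,k-1,\dots$, and the freedom to use arbitrary nonautonomous sequences in $\nFF$ is precisely what makes every recurrence solvable, in contrast to the situation over $\FF$.

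The only point that requires care is the bookkeeping of this induction: one must verify that every quantity entering $\Psi_r(n)$—the shifted coefficients $g_j(n+p)$ with $p<m$ and the derivative $D_t(g_r(n))$—has indeed been computed at a strictly earlier stage, so that the recurrence for $g_{r-m}$ is genuinely closed. This is guaranteed by the strict inequality $j>r-m$ for the surviving terms and by $r+m>r$ for the $D_t$ term, both of which rely on the hypothesis $m\ge1$. Apart from this ordering argument the construction is purely algebraic, which is the main reason I expect no essential obstacle beyond organizing the recursion correctly.
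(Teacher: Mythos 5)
Your proof is correct and follows essentially the same route as the paper's: expanding the Lax equation in powers of $T$, isolating the leading term of $F$ to obtain, in the order of $T^{j+m}$, a step-$m$ linear recurrence $f_m(n)g_j(n+m)-f_m(n+j)g_j(n)=\Psi_{j+m}(n)$ for each coefficient $g_j$, and solving it uniquely in both directions from the $m$ prescribed initial values since $f_m(n)\ne0$. Your extra care with the ordering of the induction (that $\Psi$ only involves previously determined coefficients) is just a more explicit spelling-out of the paper's ``the right hand side contains the members of already found sequences $g_i(n)$ with $i>j$''.
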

\begin{proof}
The coefficients of $g_j(n)$ are computed step by step at $j=k,k-1,\dots$\,.
The equation for $g_j(n)$ appears from (\ref{Gt}) in the order of $T^{j+m}$ and
it is a recurrent relation of the form
\[
 f_m(n)g_j(n+m)-f_m(n+j)g_j(n)=\dots
\]
where the right hand side contains the members of already found sequences
$g_i(n)$ with $i>j$. From here, all values $g_j(n)$, $n\in\Integer$ are defined
uniquely if the values $g_j(0),\dots,g_j(m-1)$ are given.
\end{proof}

\begin{theorem}\label{th:root1}
Let series $G,F\in\FF((T^{-1}))$ satisfy the Lax equation $D_t(G)=[F,G]$ and
$\deg F=m\ge1$, $\deg G=k\ge1$. Then a series $H\in\FF((T^{-1}))$ of degree $m$
exists, unique up to a factor $1^{1/k}$, such that $D_t(H)=[F,H]$ and
$H^k=G^m$.
\end{theorem}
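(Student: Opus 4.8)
The plan is to pin $H$ down \emph{not} by extracting a $k$-th root of $G^m$ directly---that would require inverting difference operators of the shape $T^m-T^j(a)/a$, which is exactly where the generic obstruction to root extraction sits---but by exploiting the relations $D_t(H)=[F,H]$ and $H^k=G^m$ simultaneously, as indicated before the statement. First I would collect the two costless consequences of the hypotheses. Since $D_t$ and $[F,\cdot\,]$ are derivations of $\FF((T^{-1}))$, any power of a solution is again a solution; in particular $G^m$ satisfies $D_t(G^m)=[F,G^m]$ and has degree $mk$, so it suffices to produce a degree-$m$ solution $H$ with $H^k=G^m$. Comparing top orders in $D_t(G)=[F,G]$, where $\deg D_t(G)\le k<m+k$, yields the leading relation $T^m(g_k)/g_k=T^k(f_m)/f_m$, which I will use repeatedly.

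Next I would fix the leading coefficient. Writing $H=h_mT^m+\cdots$, the top order of $H^k=G^m$ demands
\[
 \prod_{i=0}^{k-1}T^{im}(h_m)=\prod_{j=0}^{m-1}T^{jk}(g_k)=:P .
\]
Using the leading relation for $g_k$ one computes, by telescoping, that $T^m(P)/P=T^{mk}(f_m)/f_m$, so that $P/\prod_{i=0}^{k-1}T^{im}(f_m)$ is a $T^m$-periodic element of $\FF$ and hence a constant $c^k\in\Complex$. Therefore $h_m=cf_m$ solves the product equation, is autonomous, and is unique up to the factor $1^{1/k}$ asserted in the statement. As a bonus, with $h_m=cf_m$ the $T^{2m}$-coefficient of $[F,H]$, namely $f_mT^m(h_m)-h_mT^m(f_m)$, vanishes, so the Lax defect introduced below has degree strictly less than $2m$.

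The heart of the construction is the recurrence for the remaining coefficients, and this is where the nonautonomous ring enters. At each order the Lax equation contributes a term $f_mT^m(h_j)-T^j(f_m)h_j$ plus already-known data, while $H^k=G^m$ contributes another expression in $h_j$ built from shifts of $h_m=cf_m$ plus already-known data; in the simplest case $k=2$ these share the shift part $T^m(h_j)$, so that subtracting cancels it and leaves $h_j$ by plain division, while for general $k$ the same idea gives a linear recurrence in $n$ that is solvable in $\nFF$ by Lemma~\ref{l:n-Lax}. This determines a candidate series $H\in\nFF((T^{-1}))$ without ever inverting the bad operators, which is precisely what the generic theory cannot do; its autonomy will be recovered only at the end.

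The main obstacle is the verification that the two equations hold \emph{separately}, since the construction only couples them at each order rather than forcing either to vanish. I would set $A:=D_t(H)-[F,H]$ and $B:=H^k-G^m$, note that differentiating $H^k$ gives the identity
\[
 \sum_{i=0}^{k-1}H^iAH^{k-1-i}=D_t(B)-[F,B],
\]
valid because $D_t(G^m)=[F,G^m]$, and then run a downward induction on the order. Assuming $A$ and $B$ vanish above a given order, the displayed identity together with the order-by-order link supplied by the construction forces the next coefficients of both to vanish, the delicate point being to control the leading behaviour of the map $X\mapsto\sum_iH^iXH^{k-1-i}$ on the low-degree range in which $A$ lives---here the bound $\deg A<2m$ and the \emph{uniqueness} part of Lemma~\ref{l:n-Lax} do the real work of ruling out spurious cancellations. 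Once $A=0$ and $B=0$ are established, $H$ is the required degree-$m$ solution with $H^k=G^m$; its coefficients, being obtained from the autonomous data $F,G$ by operations that commute with the shift $u_n\mapsto u_{n+1}$, are autonomous by uniqueness, so $H\in\FF((T^{-1}))$, and uniqueness up to $1^{1/k}$ follows from the corresponding freedom in $c$.
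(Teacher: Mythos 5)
Your preparatory steps are sound: the leading-coefficient analysis (using $\ker(T^m-1)=\Complex$ in $\FF$), the identity $\sum_{i}H^i{\cal A}H^{k-1-i}=D_t({\cal B})-[F,{\cal B}]$ for the defects ${\cal A}=D_t(H)-[F,H]$, ${\cal B}=H^k-G^m$, and the plan of exploiting both equations at once. But the core of the argument does not close. First, your construction for general $k$ is mis-specified: eliminating the shifted unknowns $T^{im}(h_j)$ between the two order-$j$ equations leaves not ``a linear recurrence in $n$ solvable by Lemma~\ref{l:n-Lax}'' but an \emph{explicit} formula, $k\,h_j(n)\,f_m(n+j)\cdots f_m(n+j+(k-2)m)=c(n)$ (the paper's reduction of (\ref{AAX}) by (\ref{AX}) to $kX(n)A(n)^{k-1}=c(n)T^{(k-1)m+j}$); no inversion and no lemma are needed to define $h_j$. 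As a result, in your scheme neither the Lax equation nor $H^k=G^m$ holds by construction---by your own words ``the construction only couples them at each order rather than forcing either to vanish''---so everything rests on your downward induction, and that is the genuine gap. Under the inductive hypothesis, your identity together with the single relation imposed per order gives only a coupled \emph{homogeneous} linear difference system for the two lowest defect coefficients (for $k=2$ it reads $\beta=\alpha$ and $(1+c)f_mT^m(\alpha)=(1-c)\,\alpha\,T^{m+j}(f_m)$); to conclude $\alpha=\beta=0$ you must prove this system has no nonzero solution in $\FF$. Nothing in your proposal does this, and it cannot be taken for granted: it is precisely a kernel-triviality assertion for operators of the type $T^m-\const\,T^{j}(f_m)/f_m$, whose kernels are nontrivial in general (witness the one-parameter family $h_m=\const f_m$, and the spurious constants in the example after Corollary~\ref{cor:root2}). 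Your two cited tools do not engage with this: the bound $\deg{\cal A}<2m$ says nothing about kernels, and the uniqueness in Lemma~\ref{l:n-Lax} concerns the initial-value problem for the Lax equation in $\nFF$, whereas your defects ${\cal A},{\cal B}$ do not individually satisfy a Lax equation. In effect you have reproduced the heuristic which the paper itself flags, just before the theorem, as ``not obvious,'' without supplying its justification.

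The paper closes this hole by a different arrangement of the same ingredients, and the difference is essential. It never imposes the combined explicit formula for all $n$: it solves the Lax equation \emph{exactly} in $\nFF$ via Lemma~\ref{l:n-Lax}, leaving the initial values $h_j(0),\dots,h_j(m-1)$ free at each order, and uses $H^k=G^m$ only at the $m$ slots $n=0,\dots,m-1$ (conditions (\ref{hkgm})) to fix those values; there the reduction to the explicit formula is legitimate because (\ref{AX}) already holds identically by construction. Then ${\cal A}\equiv0$ by construction, your identity degenerates to $D_t({\cal B})=[F,{\cal B}]$, and ${\cal B}$ is a solution of the \emph{linear} Lax equation with vanishing initial data, hence ${\cal B}\equiv0$ by the uniqueness in Lemma~\ref{l:n-Lax}---no control of the map $X\mapsto\sum_iH^iXH^{k-1-i}$ on low-degree series and no kernel analysis is ever required. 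Autonomy then follows because the explicit formula holds a posteriori for all $n$. If you reorganize your construction this way, the proof becomes complete.
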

\begin{proof}
The leading coefficient of the series $H$ satisfies the equation
\[
 f_mT^m(h_m)=T^m(f_m)h_m
\]
with the general solution in $\FF$ of the form $h_m=\const f_m$. The constant
is determined, up to the root of 1, from comparing the leading terms in the
equality $H^k=G^m$.

Let us construct the rest coefficients as sequences $h_j(n)$ from the ring
$\nFF$, according to lemma \ref{l:n-Lax}. We will prove, by induction on $j$,
that there exist unique initial data $H(0),H(1),\dots,H(m-1)$ such that the
conditions
\begin{equation}\label{hkgm}
 (H^k-G^m)|_{n=0}=\dots=(H^k-G^m)|_{n=m-1}=0
\end{equation}
are satisfied. Assume that we have already found the coefficients
$h_m(n),\dots$, $h_{j+1}(n)$ such that the equation $D_t(H)=[F,H]$ is satisfied
up to the order of $T^{m+j+1}$, and equations (\ref{hkgm}) are fulfilled up to
the order of $T^{m(k-1)+j+1}$. Equations for $h_j(n)$ which appear in the next
orders can be written down by use of the operators $A(n)=f_m(n)T^m$ and
$X(n)=h_j(n)T^j$ as follows:
\begin{gather}
\label{AX}
 [A(n),X(n)]=a(n)T^{m+j},\quad n\in\Integer,\\
\nonumber
 A(n)^{k-1}X(n)+A(n)^{k-2}X(n)A(n)+\dots+X(n)A(n)^{k-1}\qquad\\
\label{AAX}
 \qquad\qquad =b(n)T^{(k-1)m+j},\quad n=0,\dots,m-1
\end{gather}
where $a(n),b(n)$ are certain polynomials with respect to the coefficients of
the series $F,G$ and coefficients $h_m,\dots,h_{j+1}$ already found
(differentiated with respect to $t$ among them). One can easily see that
equation (\ref{AAX}) is reduced by use of (\ref{AX}) to equations of the form
c$kX(n)A(n)^{k-1}=c(n)T^{(k-1)m+j}$, that is,
\[
 kh_j(n)f_m(n+j)\cdots f_m(n+j+(k-2)m)=c(n),\quad n=0,\dots,m-1.
\]
This uniquely defines the initial data for the sequence $h_j(n)$ and completes
the induction step.

The series $H^k-G^m$ is, for the constructed solution, a solution of the Lax
equation with vanishing initial data and according to the lemma \ref{l:n-Lax}
it is identically zero. Therefore, equations (\ref{AAX}) are fulfilled for all
$n\in\Integer$, not only for $n$ stated above.

Next, let us notice that the polynomials $a(n),b(n)$ and, therefore, $c(n)$,
are of the same form for all $n$, as functions of the coefficients of the
series $F,G,H$. This means that if $h_j(n)=p[F,G,H_{>j}]$, where $p$ is a
function of finite number of variables, then $h_j(n+1)=p[T(F),T(G),T(H_{>j})]$.
Since all $f_i,g_i\in\FF$ and $h_m=f_m\in\FF$, hence we prove, again by
induction, that all $h_j\in\FF$.
\end{proof}

\begin{remark}\label{rem:approx}
It is clear from the proof that a computation of the first $r$ coefficients of
$H$ requires exactly $r$ coefficients of $G$ (not taking the coefficients of
$F$ into account). Indeed, the series $G$ is used only in the initial
conditions (\ref{hkgm}) where the number of coefficients of both series
coincides in each order of $T$.
\end{remark}

It is easy to demonstrate that if the Lax equation (\ref{Gt}) possesses at
least one nontrivial solution (that is, different from $cT^0$) then its general
solution is represented as a series with constant coefficients
\[
 G=\sum_{j<+\infty}c_jH^j
\]
where $H$ is a solution of minimal positive degree $d$. This implies, in
particular, that any two solutions commute and it follows from the theorem that
$d$ is a divisor of $m$.

\begin{corollary}\label{cor:root2}
If an equation $D_t(G)=[F,G]$ admits a nontrivial solution in $\FF((T^{-1}))$
then it admits, as well, a solution of the form
\[
 G=f_mT^m+\dots+f_1T+g_0+g_{-1}T^{-1}+\ldots~\in\FF((T^{-1})).
\]
\end{corollary}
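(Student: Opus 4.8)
The plan is to begin from some degree-$m$ solution and then repeatedly subtract suitable \emph{constant} multiples of lower-degree solutions, so as to force the positive-degree coefficients to agree, one at a time from the top down, with those of $F$. This is exactly the mechanism illustrated in the second example: there the correction $G\mapsto G+cH$ with $H$ a degree-$1$ solution replaced $g_1$ by $f_1$ without disturbing the $T^2$-coefficient.

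First I would produce a starting solution of degree $m$ whose leading coefficient is precisely $f_m$. By the discussion preceding this corollary, the existence of a nontrivial solution guarantees a solution $P$ of minimal positive degree $d$ with $d\mid m$; applying theorem \ref{th:root1} to $F$ and $P$ yields a solution of degree $m$ with leading coefficient $\const f_m$, and rescaling by a constant (which preserves solutions, since $D_t(cG)=[F,cG]$ for $c\in\Complex$) makes it equal to $f_m$. Call this solution $G$.

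Then I would run a downward induction on the largest index $j\in\{1,\dots,m-1\}$ at which $g_j\ne f_j$. Given such a $j$, set $\tilde F:=F-G$; since $g_i=f_i$ for $i>j$, this series has degree exactly $j$, with leading coefficient $f_j-g_j\ne0$. Because $[G,G]=0$, the identity $D_t(G)=[F,G]=[\tilde F,G]$ shows $G$ to be a degree-$m$ solution of the Lax equation for $\tilde F$. Applying theorem \ref{th:root1} to $\tilde F$ (degree $j$) and $G$ (degree $m$) produces a degree-$j$ solution $H$ of $D_t(H)=[\tilde F,H]$ whose leading coefficient is $h_j=\const(f_j-g_j)$. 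Since $G$ and $H$ are both solutions of the $\tilde F$-equation, the commutativity of solutions (the discussion preceding this corollary, applied to $\tilde F$) gives $[G,H]=0$, whence $[F,H]=[\tilde F,H]+[G,H]=[\tilde F,H]=D_t(H)$, so that $H$ is in fact a solution of the original equation. Now $G':=G-cH$ with $c:=(g_j-f_j)/h_j$ is again a solution, and the essential point is that $c=-1/\const$ is a genuine constant, so $G'$ really does satisfy $D_t(G')=[F,G']$. As $\deg H=j$, the series $G'$ agrees with $G$ in all degrees $>j$ and has $g'_j=f_j$; hence it matches $F$ in degrees $m$ down to $j$. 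Iterating strictly lowers the top mismatch and terminates after at most $m-1$ steps at a solution of the required form.

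The step I expect to be the main obstacle is precisely the verification that the correction coefficient $c$ is a constant and not a function of the dynamical variables: only then is $G-cH$ guaranteed to be a new solution of the Lax equation. This is resolved by the fact that theorem \ref{th:root1} forces the leading coefficient of $H$ to be a constant multiple of that of $\tilde F=F-G$, so that $c=(g_j-f_j)/h_j$ collapses to the constant $-1/\const$. A secondary point needing care is the transfer of $H$ from a solution of the auxiliary $\tilde F$-equation to a solution of the original $F$-equation, which rests on applying the commutativity of solutions to the Lax equation for $\tilde F$.
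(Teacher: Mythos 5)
Your proposal is correct and follows essentially the same route as the paper's own proof: both pass to the auxiliary Lax equation $D_t(G)=[F-G,G]$, apply theorem \ref{th:root1} to extract a solution $H$ of degree $\deg(F-G)$, use the commutativity of solutions and the constancy of the ratio between $f_j-g_j$ and the leading coefficient of $H$ to form the corrected solution $G+cH$, and iterate until the mismatch drops below degree $1$. The only difference is that you make explicit two points the paper leaves implicit (that $c$ is a genuine constant and that $H$ also solves the original $F$-equation), which is a faithful elaboration rather than a different argument.
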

\begin{proof}
Let $G$ be a solution of degree $m$ which exists according to theorem
\ref{th:root1}. One can assume that the leading terms of $G$ and $F$ coincide,
without loss of generality. Let $\deg(F-G)=l$. If $l\le0$ then the statement is
true, let us consider the case $1\le l<m$.

Application of theorem \ref{th:root1} to equation $D_t(G)=[F-G,G]$ proves that
there exists a series $H\in\FF((T^{-1}))$ of degree $l$ which satisfies
equations $D_t(H)=[F-G,H]$. The series $H=h_lT^l+\dots$ commutes with $G$ and
$f_l-g_l=ch_l$. Therefore, the series $G'=G+cH$ satisfies the original equation
and $\deg(F-G')<l$. Repeating this arguments several times, if necessary, we
come to a solution which coincides with $F$ up to the term $f_1T$ inclusively.
\end{proof}

Notice, that the Lax equation may admit, in first $m$ orders of $T$, a solution
$G_{>0}$ which contains additional parameters comparing with $F_{>0}$, even if
$m$ is the minimal degree of the true solution $G$. In such a case, these
parameters vanish automatically in the process of solving equations for the
rest coefficients of $G$.

\begin{example}
Let us consider the lattice equation
\[
 u_{,t}=f[u]=u_1u^2u_{-1}(u_2-u_{-2})
\]
related via the substitution $u_1u=v$ with the modified Volterra model on the
stretched lattice $v_{,t}=v^2(v_2-v_{-2})$. This substitution acts on the
higher symmetries as well and this guarantees (see next section) the
solvability of the Lax equation with
\[
 F=f_*=u_1u^2u_{-1}T^2+u^2u_{-1}(u_2-u_{-2})T+\cdots.
\]
It is easy to check that the Lax equation admits, in first two orders,
solutions of degree 2 and 1:
\[
 G_{>0}=F_{>0}+cuu_{-1}T,\quad H_{>-1}=uu_{-1}T+u_1u_{-1}-uu_{-2},
\]
moreover, $G_{>0}=(H^2+cH)_{>0}$. However, the equation for the third
coefficient of $H$ does not admit a solution in $\FF$. For the solution $G$,
this means that the constant $c$ must be set equal to zero when computing the
fourth coefficient.
\end{example}

\section{Formal symmetry}\label{s:G}

Let us recall basic notions of the symmetry approach, in the context of scalar
evolutionary lattice equations
\begin{equation}\label{ut}
 \partial_t(u_n)=T^n(f(u_m,\dots,u_{\bar m})).
\end{equation}
For the sake of definiteness, we will write arguments of functions in
decreasing order of subscripts, moreover, we will assume (applying the
reflection $u_n\to u_{-n}$ if needed) that
\[
 f^{(m)}\ne0,\quad f^{(\bar m)}\ne0,\quad m\ge1,\quad m\ge\bar m
\]
where $f^{(j)}=\partial_j(f)$, $\partial_j=\partial/\partial u_j$. The numbers
$m$ and $-\bar m$ are called {\em order} and {\em lower order} of the lattice
equation. Any function $a\in\FF$ gives rise to the {\em evolutionary
derivative} $\nabla_a$ and the {\em linearization operator} $a_*$:
\[
 \nabla_a=\sum_{j\in\Integer}T^j(a)\partial_j,\qquad
 a_*=\sum_{j\in\Integer}a^{(j)}T^j\in\FF[T,T^{-1}].
\]
We will use also the notation $D_t=\nabla_f$ for the differentiation in virtue
of equation (\ref{ut}). The following identities are easy to prove:
\[
 [\nabla_a,T]=0,~~ (T(a))_*=Ta_*,~~
 \nabla_a(b)=b_*(a),~~ (\nabla_a(b))_*=\nabla_a(b_*)+b_*a_*.
\]
The lattice equation
\begin{equation}\label{utau}
 \partial_\tau(u_n)=T^n(g(u_k,\dots,u_{\bar k}))
\end{equation}
is called {\em symmetry} of equation (\ref{ut}) if the condition
\[
 \nabla_f(g)=\nabla_g(f)
\]
holds identically with respect to $u_j$. It means that the flows
$\partial_t,\partial_\tau$ commute (guaranteeing the existence of a common
generic solution $u_n(t,\tau)$). The lattice equation is considered integrable
if it admits symmetries of order arbitrarily large. The linearization of the
latter equation brings it to the operator form
\begin{equation}\label{fg}
 \nabla_f(g_*)=\nabla_g(f_*)+[f_*,g_*]
\end{equation}
which is more convenient for the analysis. Neglecting of the term
$\nabla_g(f_*)$ which is of a fixed degree in $T$ brings to equation
\begin{equation}\label{fG}
 D_t(G)=[f_*,G].
\end{equation}
Its solutions are called {\em formal symmetries} of the lattice equation
(\ref{ut}).

\begin{theorem}\label{th:GG}
If the lattice equation (\ref{ut}) admits symmetries (\ref{utau}) with $k$
arbitrarily large then equation (\ref{fG}) admits a solution
$G\in\FF((T^{-1}))$ of the form
\begin{equation}\label{G}
 G=f^{(m)}T^m+\dots+f^{(1)}T+g_0+g_{-1}T^{-1}+\cdots.
\end{equation}
If the lattice equation (\ref{ut}) with $\bar m<0$ admits symmetries
(\ref{utau}) with $-\bar k$ arbitrarily large then equation (\ref{fG}) admits a
solution $\bar G\in\FF((T))$ of the form
\begin{equation}\label{barG}
 \bar G=f^{(\bar m)}T^{\bar m}+\dots+f^{(-1)}T^{-1}+\bar g_0+\bar g_1T+\cdots.
\end{equation}
\end{theorem}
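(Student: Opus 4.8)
The starting point is to linearize the symmetry condition. Writing the symmetry relation for (\ref{utau}) in the operator form (\ref{fg}) and rearranging, one obtains
\[
 D_t(g_*) = [f_*, g_*] + \nabla_g(f_*).
\]
Since $\nabla_g(f_*) = \sum_j \nabla_g(f^{(j)})T^j$ has the same degree $m$ as $f_*$, the series $M := g_*$, which has degree $k$, satisfies the Lax equation (\ref{fG}) \emph{exactly} in every order $T^{m+i}$ with $i\ge1$; the defect $\nabla_g(f_*)$ pollutes only the orders $T^m$ and below. Thus a symmetry of order $k$ furnishes an approximate solution of (\ref{fG}) of degree $k$ that is exact in its top $k$ orders, and the hypothesis that $k$ may be taken arbitrarily large provides approximate solutions exact in arbitrarily many leading orders.

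I would then convert such an approximate solution into an exact one of degree $m$ by the explicit root extraction of theorem \ref{th:root1}. The algorithm in that proof computes the coefficients $h_m, h_{m-1}, \dots$ of a degree-$m$ series $H$ from the two relations $D_t(H)=[f_*,H]$ and $H^k = M^m$ simultaneously, and by remark \ref{rem:approx} the first $r$ coefficients of $H$ require only the first $r$ coefficients of $M$ and only the Lax relation for $M$ in its top orders. For $k$ sufficiently large relative to $r$, all orders invoked lie among those in which $M$ is an exact solution; consequently the first $r$ coefficients of $H$ are genuine elements of $\FF$ and satisfy (\ref{fG}) in the corresponding orders, independently of $k$.

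Letting $k\to\infty$ along the given family of symmetries, these coefficients stabilize and assemble into a series $H\in\FF((T^{-1}))$ of degree $m$ solving (\ref{fG}) in all orders, with leading coefficient a constant multiple of $f^{(m)}$; normalizing the ambiguous root of unity gives $h_m = f^{(m)}$. Applying corollary \ref{cor:root2} to $H$, i.e.\ adjusting it by the lower-degree commuting solutions it generates, then produces a solution whose positive part coincides with that of $f_*$, which is precisely a series of the form (\ref{G}). The statement for $\bar G\in\FF((T))$ with $\bar m<0$ follows verbatim after the substitution $u_n\to u_{-n}$, $T\to T^{-1}$ recorded at the start of section \ref{s:H}, which interchanges the two division rings.

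The main obstacle is the stabilization step: one must control how the degree-$m$ defect $\nabla_g(f_*)$ propagates through the nonlinear relation $H^k = M^m$ and through the commutator, and confirm that for each fixed $r$ it stays confined to orders below those that determine the first $r$ coefficients of $H$. Establishing this confinement---equivalently, that the extracted coefficients are insensitive to the contaminated tail of $M$ once $k$ exceeds a bound depending on $r$---is what makes the limit well defined and renders $H$ an exact rather than merely approximate solution.
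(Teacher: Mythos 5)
Your proposal reproduces the paper's own proof step for step: the linearization (\ref{fg}) showing that $g_*$ satisfies (\ref{fG}) exactly in its top $k$ orders because the defect $\nabla_g(f_*)$ has the fixed degree $m$; the application of theorem \ref{th:root1} together with remark \ref{rem:approx} to an approximate solution, yielding a degree-$m$ series satisfying the Lax equation in the leading orders (this is precisely the paper's relation (\ref{Gappro})); the normalization $G_{>0}=(f_*)_{>0}$ via corollary \ref{cor:root2}; and the reflection $u_n\to u_{-n}$, $T\to T^{-1}$ for the statement about $\bar G$. So the approach is the same, and it is carried out at essentially the same level of rigor as in the paper.

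Two points of divergence should be flagged, though. First, what you call the ``main obstacle'' is not an obstacle at all: the confinement of the defect---that the first $r$ coefficients of $H$ are determined by the first $r$ coefficients of the input series and by the Lax relation only in the corresponding top orders---is exactly the content of remark \ref{rem:approx} (the input series enters the construction of theorem \ref{th:root1} only through the initial conditions (\ref{hkgm}), order by order), and you yourself invoke that remark for this purpose in your second paragraph; the paper uses it without further ado to write (\ref{Gappro}). Second, and more seriously, your claim that the extracted coefficients ``stabilize'' as $k\to\infty$ is stronger than what the paper asserts, and it is false as stated: symmetries of different orders $k$ are different functions, their linearizations $g_*$ are not truncations of one common series, and the degree-$m$ roots extracted from them may genuinely differ in low orders---by elements of the kernels of the operators $A_j=T^m-T^j(f^{(m)})/f^{(m)}$, e.g.\ by an additive constant in $g_0$ since $\ker(T^m-1)=\Complex$ (the paper's example after corollary \ref{cor:root2} shows such spurious parameters do occur and are only killed at later orders). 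What the argument actually delivers, and all the paper claims, is that equation (\ref{fG}) is solvable in every order of $T$, i.e.\ the obstruction $b_j\in\im A_j$ vanishes for each $j$; passing from order-by-order solvability to a single series $G$ solving all orders simultaneously needs a further (also omitted in the paper) remark, e.g.\ that the freedom at each order is finite-dimensional over $\Complex$, so the nested images of the truncation maps between partial-solution spaces stabilize and a coherent choice exists. Rephrase your limiting step in these terms and the proof matches the paper's.
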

\begin{proof}
It is sufficient to prove the first statement, taking the change $u_n\to
u_{-n}$, $T\to T^{-1}$ into account. The series
$g_*=g^{(k)}T^k+\dots+g^{(1)}T+o(T)\in\FF((T^{-1}))$ satisfies equation
(\ref{Gt}) in the orders $T^{k+m},\dots,$ $T^{m+1}$. The procedure of the root
extraction described in theorem \ref{th:root1} yields, taking remark
\ref{rem:approx} into account, a series $G\in\FF((T^{-1}))$ such that
\begin{equation}\label{Gappro}
 D_t(G)=[f_*,G]+o(T^{2m-k+1}),\quad \deg G=m.
\end{equation}
Moreover, one can choose $G_{>0}=(f_*)_{>0}$ (if $k>m$) according to corollary
\ref{cor:root2}. Since $k$ is arbitrarily large, hence equation (\ref{Gt}) is
solvable in all orders of $T$.
\end{proof}

The conditions of solvability of equation (\ref{fG}) with respect to the
coefficients of the series $G$ or $\bar G$ serve as the necessary integrability
conditions. Equation (\ref{Gappro}) demonstrates that existence of a symmetry
of order $k\ge m+r$ implies that $r$ conditions are fulfilled, for the
coefficients $g_0,\dots,g_{-r+1}$. A symmetry with $k\le m$ gives no conditions
because it is `lost' on the background of the trivial symmetry with $g=f$ which
corresponds to the operator $G=(f_*)_{>0}$. Analogously, if there exist
symmetries with lower order $-\bar k=-\bar m+r$ then the solvability conditions
are fulfilled for the coefficients $\bar g_0,\dots,\bar g_{r-1}$ of the series
$\bar G$. Unfortunately, we do not know how many conditions must be checked in
order to guarantee the existence of at least one symmetry, even of small order.
Nevertheless, these conditions are rather convenient both for testing and
classification purposes, because we write them intermediately in terms of the
right hand side of the equation and their form does not depend on the actual
orders of symmetries which are not known in advance.

The first condition and a corollary of the $m$-th one are especially simple.
Let us make use of the following simple property:
\[
 \Res[A,B]\in\im(T-1),\quad \Res A:=\coef_{T^0}A,\quad A,B\in\FF((T^{-1}))
\]
(indeed, $[aT^j,bT^{-j}]=(T^j-1)(T^{-j}(a)b)$).

\begin{statement}\label{st:m0}
If the lattice equation (\ref{ut}) admits a symmetry (\ref{utau}) of order
$k>2m$ then there exist functions $\sigma,\sigma_1\in\FF$ such that
\begin{equation}\label{ss}
 D_t(\log f^{(m)})=(T^m-1)(\sigma),\quad
 D_t(f^{(0)}+\sigma)=(T-1)(\sigma_1).
\end{equation}
If (\ref{ut}) admits a symmetry (\ref{utau}) with $\bar k\le2\bar m<0$ then
there exist functions $\bar\sigma,\bar\sigma_1\in\FF$ such that
\begin{equation}\label{barss}
 D_t(\log f^{(\bar m)})=(T^{\bar m}-1)(\bar\sigma),\quad
 D_t(f^{(0)}+\bar\sigma)=(T-1)(\bar\sigma_1).
\end{equation}
\end{statement}
\begin{proof}
According to theorem \ref{th:GG}, equation (\ref{fG}) is solvable in the orders
of $T^{2m},\dots,T^0$, moreover, we can assume $G_{>0}=(f_*)_{>0}$. Application
of $\Res$ to this equation and its equivalent form
\[
 D_t(G)G^{-1}=f_*-G-G(f_*-G)G^{-1}
\]
yields, respectively,
\[
 D_t(g_0)\in\im(T-1),\quad D_t(f^{(m)})/f^{(m)}=(T^m-1)(g_0-f^{(0)})
\]
which is equivalent to (\ref{ss}); equations (\ref{barss}) are obtained in a
similar way.
\end{proof}

In general, equation (\ref{fG}) in each order of $T$ is of the form
\[
 A_j(g_j)=b_j,\quad A_j=T^m-\frac{T^j(a)}{a},\quad a=f^{(m)},\quad
 j=0,-1,-2,\dots
\]
where $b_j$ is a known expression which contains the coefficients of $f_*$ and
$g_0,\dots,g_{j+1}$. Thus, the integrability test for a given lattice equation
amounts to a step-by-step checking of whether $b_j\in\im A_j$; if not then the
equation is not integrable, if yes then we have to compute $g_j$ and to go to
the next condition. Notice, that equation $A_{j+m}(g)=b$ is reduced to
$A_j(\tilde g)=\tilde b$ under the change $g=T^j(a)\tilde g$, so that the
following problems appear: to characterize the image and to compute the
pre-image of the operators of the form
\[
 T^m-1,\quad T^m-\frac{T(a)}{a},~\dots,\quad T^m-\frac{T^{m-1}(a)}{a}
\]
with a given function $a$. The solution is well known at $m=1$:
\[
 \im(T-1)=\ker\Eu,\quad
 \Eu=\frac{\delta}{\delta u}=\sum_{j\in\Integer}T^{-j}\partial_j
\]
where $\Eu$ is called the {\em Euler operator} or the {\em variational
derivative}, while the pre-image of $T-1$ can be found by a difference version
of the integration by parts algorithm or by use of the homotopy operator, see
e.g.\,\cite{Kupershmidt_1985,Hydon_Mansfield_2004}. At $m>1$, the problem
admits a constructive solution as well, although it is more complicated (in
particular, the answer depends on whether $\log a$ belongs to the image of the
operator $T^{m-d}+\dots+T^d+1$ where $d|m$).

\section{Formal conservation law}\label{s:R}

The symmetric case $\bar m=-m$ is of most interest, because only such type of
lattice equations may admit higher order conservation laws. Let us recall that
a function $\rho\in\FF$ is called a {\em density of conservation law} for the
lattice equation (\ref{ut}) if there exists a function $\sigma\in\FF$ such that
\begin{equation}\label{frho}
 \nabla_f(\rho)=(T-1)(\sigma).
\end{equation}
A density is called {\em trivial} if $\rho\in\im(T-1)$ and two densities are
called equivalent if their difference is trivial. In order to factor out the
trivial conservation laws, let us apply the Euler operator to (\ref{frho}),
this yields the equation for $r=\Eu(\rho)=\rho_*^\dag(1)$:
\[
 \nabla_f(r)+f_*^\dag(r)=0
\]
where $\dag:\FF((T^{\pm1}))\to\FF((T^{\mp1}))$ denotes the conjugation
$(aT^j)^\dag=T^{-j}a$. Application of linearization once again yields the
operator equation
\begin{equation}\label{fr}
 \nabla_f(r_*)+f_*^\dag r_*+r_*f_*
  +\sum_{m\ge i,j\ge\bar m}T^{-j}(rf^{(i,j)})T^{i-j}=0.
\end{equation}
Notice, that the operator $r_*$ is symmetric, $r_*=r_*^\dag$. In particular,
$r$ depends on a symmetric set of variables: $r=r(u_k,\dots,u_{-k})$, $r^{(\pm
k)}\ne0$. The number $k$ is called the {\em order of the conservation law}. It
is easy to see that the degrees with respect to $T$ of the four terms in
equation (\ref{fr}) are equal to $k$, $k-\bar m$, $k+m$ and $M\le m-\bar m$,
respectively. This implies that equation (\ref{ut}) with $\bar m\ne-m$ can not
possess conservation laws of order $k>\min(m,-\bar m)$. So, we assume that
$\bar m=-m$ in what follows, that is, the lattice equation is of the
form
\begin{equation}\label{utm}
 \partial_t(u_n)=T^n(f(u_m,\dots,u_{-m})),\quad f^{(\pm m)}\ne0.
\end{equation}
The last sum in (\ref{fr}) is of a fixed degree with respect to $T$. Neglecting
it brings to equation
\begin{equation}\label{fR}
 D_t(R)+f_*^\dag R+Rf_*=0
\end{equation}
and its solution in the form of a series from $\FF((T^{-1}))$ or $\FF((T))$ is
called a {\em formal conservation law} for the lattice equation (\ref{utm}).
Equation (\ref{fR}) is invariant with respect to the conjugation, so we may
restrict ourselves with consideration of series from $\FF((T^{-1}))$.

\begin{theorem}\label{th:GR}
Let the lattice equation (\ref{utm}) admits conservation laws of order $k$
arbitrarily large. Then equation (\ref{fG}) admits solutions of the form
\begin{align}
\label{G'}
 G&=f^{(m)}T^m+\dots+f^{(1)}T+g_0+g_{-1}T^{-1}+\ldots~\in\FF((T^{-1})),\\
\label{barG'}
 \bar G&=f^{(-m)}T^{-m}+\dots+f^{(-1)}T^{-1}
   +\bar g_0+\bar g_1T+\ldots~\in\FF((T))
\end{align}
and equation (\ref{fR}) admits a solution of the form
\begin{equation}\label{R}
  R=r_lT^l+r_{l-1}T^{l-1}+\ldots~\in\FF((T^{-1})),\quad 0\le l<m
\end{equation}
such that
\begin{equation}\label{RG}
 \bar G^\dag R=-RG.
\end{equation}
\end{theorem}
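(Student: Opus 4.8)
The proof rests on a small calculus relating the two equations. Writing the Leibniz rule for $D_t$ and using that $D_t$ commutes with the conjugation $\dag$, one checks four identities: if $R$ solves (\ref{fR}) and $G$ solves (\ref{fG}) then $RG$ solves (\ref{fR}); if $R_1,R_2$ solve (\ref{fR}) then $R_1^{-1}R_2$ solves (\ref{fG}); if $\bar G$ solves (\ref{fG}) then $\bar G^\dag$ and $RGR^{-1}$ both solve $D_t(X)=[X,f_*^\dag]$; and consequently $\bar G^\dag R$ solves (\ref{fR}). These let me move between formal symmetries and formal conservation laws, and they are the only structural use of the equations.

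First I would manufacture a formal symmetry. A conservation law of order $k$ gives a symmetric operator of degree $k$ solving (\ref{fR}) in every order above the fixed degree of the sum neglected in passing from (\ref{fr}) to (\ref{fR}); two conservation laws of distinct large orders thus give formal conservation laws $R_1,R_2$ valid to high order, and $R_1^{-1}R_2$ is, by the quotient identity, a formal symmetry valid to high order. Root extraction (Theorem \ref{th:root1}, with Remark \ref{rem:approx}) lowers its degree to $m$ and Corollary \ref{cor:root2} normalizes it to the form (\ref{G'}); since $k$ is arbitrarily large the construction closes in every order, so $G$ is a genuine solution. The same hypothesis makes (\ref{fR}) solvable in every order, giving a genuine formal conservation law $R_0$; dividing by a power of the invertible series $G$, the operator $R:=R_0G^{-j}$ with $0\le l=\deg R_0-jm<m$ solves (\ref{fR}) and has the form (\ref{R}).

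With $G$ and $R$ fixed I would set $\bar G:=-(RGR^{-1})^\dag$. By the third identity $RGR^{-1}$ solves $D_t(X)=[X,f_*^\dag]$, so $\bar G$ solves (\ref{fG}); it lies in $\FF((T))$ with lowest order $-m$; and (\ref{RG}) is then automatic, $\bar G^\dag R=-RGR^{-1}R=-RG$. The leading coefficient of $RGR^{-1}$ equals $r_lT^l(f^{(m)})/T^m(r_l)$, and the top-order part of (\ref{fR}) for $R$, namely $r_lT^l(f^{(m)})+T^m(f^{(-m)})T^m(r_l)=0$, turns this into $-T^m(f^{(-m)})$; conjugating shows the lowest coefficient of $\bar G$ is $f^{(-m)}$, as (\ref{barG'}) demands.

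The hard part is to show that the remaining lowest coefficients of $\bar G$ are $f^{(-m+1)},\dots,f^{(-1)}$, i.e. that $(RGR^{-1})_{>0}=-(f_*^\dag)_{>0}$. Since any two solutions of (\ref{fG}) commute, replacing $R$ by $RS$ with $S$ a formal symmetry leaves $RGR^{-1}$ unchanged, so this identity cannot be arranged by rescaling $R$ and must be extracted from the top $m$ orders of (\ref{fR}) for $R$ together with the normalization $G_{>0}=(f_*)_{>0}$. The only genuine freedom is in the subleading constants of $G$ that Corollary \ref{cor:root2} leaves open; I would pin them down by matching $\bar G$ against the series of the form (\ref{barG'}) produced by the reflected ($u_n\to u_{-n}$, $T\to T^{-1}$) copy of the construction of $G$, two formal symmetries of lowest order $-m$ with equal leading term differing only by lower-order symmetries. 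This finite matching in the top $m$ degrees is the real computational content of the theorem; once it is in place, (\ref{G'}), (\ref{barG'}), (\ref{R}) and (\ref{RG}) hold simultaneously.
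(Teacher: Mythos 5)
Your first three paragraphs track the intended argument closely: $G$ is obtained exactly as in the paper (quotients $r_*^{-1}r'_*$ of linearized conservation laws solve (\ref{fG}) to high order, then Theorem \ref{th:root1} with Remark \ref{rem:approx} and Corollary \ref{cor:root2}), $R$ is obtained by dividing by powers of $G$, and $\bar G:=-(RGR^{-1})^\dag$ with (\ref{RG}) automatic and the leading coefficient $f^{(-m)}$ computed correctly from the top order of (\ref{fR}). One reordering flaw in the $R$ step: a ``genuine formal conservation law $R_0$'' of unspecified degree cannot be produced first and divided afterwards, because the approximate solutions $r_*$ have degrees $k\to\infty$ and there is nothing for them to stabilize to until the degree is fixed; one must divide first, forming $r_*G^{-q}$ of degree $l=k-qm$, and invoke the pigeonhole observation that some residue $l$ modulo $m$ occurs for infinitely many conservation-law orders, so that the degree-$l$ version of (\ref{fR}) is solvable in every order. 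This is your own move in the wrong order, so it is repairable, but as written the existence of $R_0$ is unjustified.

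The genuine gap is your last paragraph. You correctly isolate what remains, namely $(RGR^{-1})_{>0}=-(f_*^\dag)_{>0}$, and even guess its source (the top orders of (\ref{fR}) plus the normalization $G_{>0}=(f_*)_{>0}$), but you never derive it; instead you call it ``the hard part'' and sketch a matching of $\bar G$ against a reflected construction, with subleading constants of $G$ to be pinned down. In fact no pinning and no matching are needed, and the step is a one-line degree count: from (\ref{fR}), $Rf_*R^{-1}=-f_*^\dag-D_t(R)R^{-1}$, and $\deg\bigl(D_t(R)R^{-1}\bigr)\le l-l=0$, so $(Rf_*R^{-1})_{>0}=-(f_*^\dag)_{>0}$; since conjugation by $R$ preserves degrees, $(RGR^{-1})_{>0}=\bigl(RG_{>0}R^{-1}\bigr)_{>0}=\bigl(R(f_*)_{>0}R^{-1}\bigr)_{>0}=(Rf_*R^{-1})_{>0}$, whence $(RGR^{-1})_{>0}=-(f_*^\dag)_{>0}$ and, after conjugation, $\bar G_{<0}=(f_*)_{<0}$, which is (\ref{barG'}). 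This holds for \emph{every} $G$ normalized as in (\ref{G'}) and every $R$ of the form (\ref{R}), so the freedom you worry about is irrelevant. Moreover your substitute plan is not obviously workable: the reflected construction does give some solution $\bar G'\in\FF((T))$ of the form (\ref{barG'}), but when the minimal degree $d$ of solutions properly divides $m$, two solutions with equal leading term can still differ by solutions of intermediate degrees $-d,-2d,\dots$, so agreement of $\bar G$ with $\bar G'$ in the top $m$ orders is exactly what has to be proved, and forcing it by correcting $\bar G$ with powers of the minimal solution would destroy the relation (\ref{RG}) you need to keep.
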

\begin{proof}
Let us consider operators $r'_*$, $r_*$ corresponding to the conservation laws
of orders $k'>k>m$ as series from $\FF((T^{-1}))$. It follows from (\ref{fr})
that these series satisfy equation (\ref{fR}) in the first $k-m$ orders of $T$.
The extraction of the root from the series $r_*^{-1}r'_*$ of degree $k'-k>0$
brings to a series of degree $m$ which also satisfies equation (\ref{fG}) in
first $k-m$ orders, that is up to the terms $o(T^{3m-k+1})$. Since $k$ is
arbitrarily large, hence equation (\ref{fG}) is solvable for all orders of $T$
and we come to a solution of the form (\ref{G'}), taking the corollary
\ref{cor:root2} into account.

Conservation laws of orders $k=qm+l$, $0\le l<m$ constitute an infinite set at
least for one value of $l$. For the corresponding series $r_*$, the series
$r_*G^{-q}$ of degree $l$ satisfies equation (\ref{fR}) in first $(q-1)m+l$
orders, whence the existence of the solution $R$ (\ref{R}) follows.

The constructed series $G,R\in\FF((T^{-1}))$ allow to obtain the series $\bar
G=-(RGR^{-1})^\dag\in\FF((T))$ which satisfies equation (\ref{fG}). Moreover,
\[
 (\bar G^\dag)_{>0}=-RG_{>0}R^{-1}=-R(f_*)_{>0}R^{-1}=(f_*^\dag)_{>0}
\]
according to (\ref{fR}), therefore $\bar G_{<0}=(f_*)_{<0}$.
\end{proof}

It is clear from comparing with theorem \ref{th:GG} (at $\bar m=-m$) that
assumption about existence of an infinite set of conservation laws brings to
more restrictive integrability conditions than assumption about the higher
symmetries:
\[
 \begin{array}{ccc}
 \text{conservation laws} & & \text{higher symmetries} \\
 \Downarrow && \Downarrow\\
 G,R&\Rightarrow & G,\bar G
 \end{array}
\]
A weak point of the conditions which follow from the equation for $R$ is that
the degree $l=\deg R$ is not known in advance, so we have to inspect the values
$l=0,\dots,m-1$. In particular, we come to the following statement instead of
\ref{st:m0}.

\begin{statement}\label{st:m0'}
If the lattice equation (\ref{utm}) admits two conservation laws of orders
$k'>k>3m$ then there exist functions $\sigma,\sigma_1,s,s_1\in\FF$ and an
integer $l$, $0\le l<m$ such that
\begin{gather}
\label{ss'}
 D_t(\log f^{(m)})=(T^m-1)(\sigma),\quad D_t(f^{(0)}+\sigma)=(T-1)(\sigma_1),\\
\label{sss}
 \log(-T^l(f^{(m)})/f^{(-m)})=(T^m-1)(s),\quad D_t(s)+2f^{(0)}=(T-1)(s_1).
\end{gather}
\end{statement}
\begin{proof}
In the notations of theorem \ref{th:GR}, the series $G,\bar G,R$ constructed
from $r'_*,r_*$ satisfy equations (\ref{fG}), (\ref{fR}) in first $2m+1$
orders. Equations (\ref{ss'}) are proved like in statement \ref{st:m0}. First
equation (\ref{sss}) follows from (\ref{fR}) in the leading order, with the
function $s=\log(f^{(-m)}r_l)$. Multiplication of (\ref{fR}) by $R^{-1}$ and
applying $\Res$ results in $D_t(\log r_l)+2f_0\in\im(T-1)$ which is equivalent
to the second equation (\ref{sss}).
\end{proof}

\section{The lattices of order 1}\label{s:1}

The integrability conditions simplify drastically for the first order lattice
equations
\begin{equation}\label{ut1}
 \partial_t(u_n)=T^n(f(u_1,u,u_{-1})),\quad f^{(\pm1)}\ne0.
\end{equation}
In this case, the Lax equation for $G=f^{(1)}T+g_0+g_{-1}T^{-1}+\dots$ turns
out to be equivalent to a sequence of conservation laws (possibly trivial)
defined by certain recurrent relations. In order to write them down we will use
the polynomials
\begin{gather*}
 P_0=1,\quad P_1(x_1)=x_1,\quad P_2(x_1,x_2)=x_2+\frac{x^2_1}{2},\\
 P_3(x_1,x_2,x_3)=x_3+x_1x_2+\frac{x^3_1}{6},~ \dots
\end{gather*}
defined by the generating function
\[
 P_0[x]+P_1[x]\lambda+P_2[x]\lambda^2+\ldots
  =\exp(x_1\lambda+x_2\lambda^2+x_3\lambda^3+\ldots).
\]
These polynomials are well known in the representation theory of
infinite-dimensional Lie algebras (see e.g. \cite{Jimbo_Miwa_1983,
Kac_Raina_1987}) and are related to the complete exponential Bell polynomials
$Y_k$ \cite{Comtet_1974} by the change
$k!P_k(x_1,\dots,x_k)=Y_k(x_1,\dots,k!x_k)$.

In order to rewrite the equation $D_t(G)=[f_*,G]$ in the form of conservation
laws, we make use of the fact that it serves as the compatibility condition for
equations
\begin{equation}\label{psi}
 G(\psi)=\lambda\psi,\quad D_t(\psi)=f_*(\psi).
\end{equation}
Let us consider expansions with respect to $\lambda$ of the ratios
\begin{equation}\label{pq}
 p=T(\psi)/\psi,\quad q=D_t(\psi)/\psi,
\end{equation}
this brings to equations
\begin{gather}
\label{gp}
 f^{(1)}p+g_0-\lambda+\frac{g_{-1}}{T^{-1}(p)}
  +\frac{g_{-2}}{T^{-1}(p)T^{-2}(p)}+\ldots=0,\\
\label{q}
 q=f^{(1)}p+f^{(0)}+\frac{f^{(-1)}}{T^{-1}(p)},\\
\label{pt}
 D_t(p)/p=(T-1)(q).
\end{gather}
It is easy to see that equation (\ref{gp}) defines an invertible change between
the coefficients of the series $G$ and
$p=p_{-1}\lambda+p_0+p_1\lambda^{-1}+\dots$:
\[
 p_{-1}=\frac{1}{f^{(1)}},\quad p_0=-\frac{g_0}{f^{(1)}},\quad
 p_1=-\frac{g_{-1}T^{-1}(f^{(1)})}{f^{(1)}},~~\dots\,,
\]
so that $G\in\FF((T^{-1}))$ if and only if $p\in\FF((\lambda^{-1}))$. Moreover,
(\ref{q}) implies that
$q=\lambda-\sigma_0-\sigma_1\lambda^{-1}-\ldots\in\FF((\lambda^{-1}))$ and a
solution of equations (\ref{psi}) (a formal Baker--Akhiezer function) is
constructed by integration of equations (\ref{pq}) as a series of the form
\[
 \psi(n)=a(n)\lambda^n(1+a_1(n)\lambda^{-1}+a_2(n)\lambda^{-2}+\dots)
\]
with coefficients in a certain extension of the field $\FF$. One more
change
\begin{equation}\label{prho}
 p=\frac{\lambda}{f^{(1)}}\exp(-\rho_1\lambda^{-1}-\rho_2\lambda^{-2}-\dots)
\end{equation}
brings equations (\ref{q}) and (\ref{pt}) to relations
\begin{gather}
\label{1.rs}
 D_t(\rho_j)=(T-1)(\sigma_j),\quad j\ge0,\\
\label{1.r01}
 \rho_0=\log f^{(1)},\quad \rho_1=f^{(0)}+\sigma_0,\\
\label{1.rj}
 P_{j+1}[-\rho]+f^{(-1)}T^{-1}(f^{(1)}P_{j-1}[\rho])+\sigma_j=0,\quad j>0.
\end{gather}
The existence of a formal symmetry $G\in\FF((T^{-1}))$ is equivalent to
solvability of equations (\ref{1.rs}) with respect to $\sigma_j\in\FF$,
moreover, the densities $\rho_{j+1}$ are explicitly found from (\ref{1.rj}):
\begin{align*}
 \rho_2&=f_{-1}T^{-1}(f_1)+\frac{1}{2}\rho_1^2+\sigma_1,\\
 \rho_3&=f_{-1}T^{-1}(f_1\rho_1)+\rho_1\rho_2-\frac{1}{6}\rho_1^3+\sigma_2,\\
 \rho_4&=f_{-1}T^{-1}(f_1(\rho_2+\frac{1}{2}\rho_1^2))
    +\rho_1\rho_3+\frac{1}{2}\rho^2_2-\frac{1}{2}\rho_1^2\rho_2
    +\frac{1}{24}\rho_1^4+\sigma_3,~\dots\,.
\end{align*}

In order to write analogously the consequences from existence of the second
formal symmetry $\bar G\in\FF((T))$, we consider a function $\bar\psi$ which
satisfies equations
\[
 \bar G^\dag(\bar\psi)=-\lambda\bar\psi,\quad
 D_t(\bar\psi)=-f_*^\dag(\bar\psi).
\]
It is easy to check, as before, that the ratios
\[
 \bar p=T(\bar\psi)/\bar\psi,\quad \bar q=D_t(\bar\psi)/\bar\psi
\]
can be expanded into series with respect to $\lambda$, of the form
\[
 \bar p=-\frac{\lambda}{T(f^{(-1)})}
  \exp(-\bar\rho_1\lambda^{-1}-\bar\rho_2\lambda^{-2}-\dots),\quad
 \bar q=\lambda-\bar\sigma_0-\bar\sigma_1\lambda^{-1}-\dots
\]
with coefficients $\bar\rho_j,\bar\sigma_j\in\FF$, and that these ratios
satisfy equations
\[
 \bar q=-T(f^{(-1)})\bar p-f^{(0)}
  -\frac{T^{-1}(f^{(1)})}{T^{-1}(\bar p)},\quad
  D_t(\bar p)/\bar p=(T-1)(\bar q).
\]
This implies the same recurrent relations (\ref{1.rs}), (\ref{1.rj}) for
functions $\bar\rho_j,\bar\sigma_j$ as for $\rho_j,\sigma_j$, but with the
initial data
\begin{equation}\label{1.r01'}
 \bar\rho_0=\log(-T(f^{(-1)})),\quad \bar\rho_1=-f^{(0)}+\bar\sigma_0
\end{equation}
instead of (\ref{1.r01}). If lattice equation (\ref{ut1}) admits, in addition,
a formal conservation law $R=r_0+r_{-1}T^{-1}+\ldots\in\FF((T^{-1}))$ then it
follows from equations (\ref{fR}), (\ref{RG}) that one can take
$\bar\psi=R(\psi)$. Then the series
\[
 s=\log(\bar\psi/\psi)=s_0+s_1\lambda^{-1}+\ldots\in\FF((\lambda^{-1}))
\]
satisfies the equations
\[
 \log\bar p-\log p=(T-1)(s),\quad D_t(s)=\bar q-q
\]
that is, both density sequences are related by equations
\begin{equation}\label{1.rr}
 \rho_j-\bar\rho_j=(T-1)(s_j),\quad j\ge0.
\end{equation}
Here, we may exclude the functions $\bar\sigma_j=\sigma_j-D_t(s_j)$ from
consideration, because $\bar\rho_j$ can be found from the recurrent
relations
\begin{gather*}
 \bar\rho_0=\log(-T(f^{(-1)})),\quad \rho_1-\bar\rho_1=2f^{(0)}+D_t(s_0),\\
 P_{j+1}[-\bar\rho]-P_{j+1}[-\rho]\mspace{350mu} \\ \mspace{50mu}
  +f^{(-1)}T^{-1}(f^{(1)}(P_{j-1}[\bar\rho]-P_{j-1}[\rho]))=D_t(s_j),
 \quad j>0.
\end{gather*}

It was already mentioned in the Introduction that the classification of
integrable lattice equations (\ref{ut1}) requires, according to
\cite{Yamilov_1983}, only three simplest conditions (1 of the type (\ref{1.rs})
and 2 of the type (\ref{1.rr}), cf. also with statement \ref{st:m0'}) which can
be cast into the form
\begin{gather*}
 D_t(\log f^{(1)})\in\im(T-1),\\
 \log(-f^{(1)}/f^{(-1)})=(T-1)(s),\quad D_t(s)+2f^{(0)}\in\im(T-1).
\end{gather*}
These conditions can be derived under assumptions that the equation admits a
symmetry of order $k\ge2$ and a conservation law of order $k'\ge3$, or that it
admits a pair of conservation laws of orders $k'>k\ge3$. The analysis of these
conditions is a rather tedious task which brings to a finite list of equations.
One can prove by inspection that all of them admit an infinite set of higher
symmetries (and conservation laws, except for several degenerate cases like the
linear equation $\partial_t(u_n)=u_{n+1}-u_{n-1}$ which does not admit
conservation laws of order $>0$, but equation (\ref{fR}) admits the solution
$R=1$).

For the sake of completeness, let us prove the statement on the equivalence of
the constructed conservation laws (\ref{1.rs}) with the standard definition
through the residues of the powers of formal symmetry. The proof follows,
essentially, to Kupershmidt \cite[ch.\,IX.3]{Kupershmidt_1985} and it is an
adaptation for the discrete case of the proof by Flaschka \cite{Flaschka_1983}.

\begin{statement}
Let the lattice equation (\ref{ut1}) admits the formal symmetry
$G=f^{(1)}T+\ldots\in\FF((T^{-1}))$ and the quantities $\rho_j$ are defined by
relations (\ref{gp}), (\ref{prho}), then
\begin{equation}\label{ResG}
 \Res G^j-j\rho_j\in\im(T-1),\quad j=1,2,\dots\,.
\end{equation}
\end{statement}
\begin{proof}
Equations (\ref{psi}), (\ref{pq}) for the $\psi$-function imply
\[
 T(\psi)=p\psi=\sum p_j\lambda^{-j}\psi=\sum p_jG^{-j}(\psi)
\]
from where the identity
\begin{equation}\label{pg}
 T=p_{-1}G+p_0+p_1G^{-1}+p_2G^{-2}+\ldots
\end{equation}
follows which is equivalent to relation (\ref{gp}) between the series $G$ and
$p$. Let us introduce the notations
\[
 G_j=(G^j)_{\ge0},\quad A_j=(G^j)_{<0},\quad \epsilon_j=\coef_{T^{-1}}G^j
\]
(in particular, $\epsilon_{-1}=1/T^{-1}(f^{(1)})$, $\epsilon_0=0$). Right
multiplication of (\ref{pg}) by $G^j$ and neglecting of the negative powers of
$T$ result in
\[
 TG_j+T(\epsilon_j)
  =p_{-1}G_{j+1}+p_0G_j+\dots+p_jG_0,\quad j=-1,0,1,\dots
\]
which is equivalent to equation (a difference version of the Cherednik formula
\cite{Cherednik_1978})
\begin{equation}\label{T-p}
 (T-p){\cal G}=-T(E)
\end{equation}
for the generating series
\[
 {\cal G}=1+\lambda^{-1}G_1+\lambda^{-2}G_2+\dots,\quad
 E=\epsilon_{-1}\lambda+\epsilon_0+\epsilon_1\lambda^{-1}+\dots\,.
\]
Application of $\Res$ yields
\begin{equation}\label{ResGG}
 \Res{\cal G}=1+\lambda^{-1}\Res G+\lambda^{-2}\Res G^2+\ldots
  =\frac{T(E)}{p}.
\end{equation}
Next, let us denote
\[
 \alpha_j=A_j(\psi)/\psi=\frac{\epsilon_j}{T^{-1}(p)}
  +\frac{a_{j,-2}}{T^{-1}(p)T^{-2}(p)}+\ldots
  =\epsilon_jT^{-1}(f^{(1)})\lambda^{-1}+\dots
\]
then
\[
 G_j(\psi)/\psi=\lambda^j-\alpha_j,\quad
 TG_j(\psi)/\psi=p(\lambda^j-T(\alpha_j)).
\]
Let us apply the identity (\ref{T-p}) to $\psi(\mu)$ and divide the result by
$\psi(\mu)$, this yields
\begin{align*}
 -T(E(\lambda))&=(T-p(\lambda)){\cal G(\lambda)}(\psi(\mu))/\psi(\mu)\\
 &=\sum_{j\ge0}\lambda^{-j}\Bigl(
  p(\mu)(\mu^j-T(\alpha_j(\mu)))-p(\lambda)(\mu^j-\alpha_j(\mu))\Bigr)\\
 &=\frac{\lambda(p(\mu)-p(\lambda))}{\lambda-\mu}
   +\sum_{j\ge0}\lambda^{-j}
    \Bigl(p(\lambda)\alpha_j(\mu)-p(\mu)T(\alpha_j(\mu))\Bigr).
\end{align*}
Division by $-p(\lambda)$ and passage to the limit $\mu\to\lambda$ brings to
equation
\[
 \frac{T(E)}{p}=\lambda\partial_\lambda(\log p)
  +\sum_{j\ge0}\lambda^{-j}(T-1)(\alpha_j)
\]
and the statement follows from comparison with (\ref{prho}) and (\ref{ResGG}).
\end{proof}

In conclusion, notice that an analog of the substitution $p=T(\psi)/\psi$ can
be defined for the lattice equations of order $m$ as well: one can prove that
an equation $G(\psi)=\lambda\psi$ with the series $G\in\FF((T^{-1}))$ of order
$m$ is equivalent to equation
\[
 (T^m-p_{m-1}T^{m-1}-\dots-p_0)(\psi)=0,\quad p_j\in\FF((\lambda^{-1})).
\]
In the matrix notations, equations (\ref{psi}) are replaced by
\[
 T(\Psi)=P\Psi,\quad D_t(\Psi)=Q\Psi
\]
where
\[
 \Psi=\begin{pmatrix}
  \psi\\[0.5ex] \vdots \\[1ex] T^{m-1}(\psi)
   \end{pmatrix},\qquad
 P=\begin{pmatrix}
   0 & 1 &       &  \\
     &   &\ddots &  \\
     &   &       & 1\\
  p_0&   &\dots  & p_{m-1}
 \end{pmatrix}
\]
and elements of the matrix $Q$ are expressed through $f^{(j)}$ and $p_j$.
Equation (\ref{pt}) is replaced with
\[
 D_t(P)=T(Q)P-PQ
\]
which implies, in particular, that $\log p_0$ serves as a generating series for
the densities of conservation laws. However, in the context of the
integrability conditions this equation leads to rather cumbersome relations,
and, apparently, it gives no advantages comparing with the straightforward
solving of the Lax equation for the formal symmetry.

\paragraph{Acknowledgements.}
The discussion of this subject with R.I. Yamilov was of great importance for
me. Conditions (\ref{ss'}), (\ref{sss}), in a slightly weaker form, were
obtained by him in 2008, under assumption about existence of a formal symmetry
of arbitrary degree as a starting point \cite{Yamilov}.

Research for this article was supported by the RFBR grant 13-01-00402a and the
project SFB/TR 109 ``Discretization in Geometry and Dynamics''.


\end{document}